\newtheorem{theorem}{Theorem}[section]
\theoremstyle{definition}
\newtheorem{remark}{Remark}
\def\ad{\mbox{ad\,}}
\def\diag{\mbox{diag\,}}
\def\fr#1{{\mathfrak{#1}}}
\def\openone{\leavevmode\hbox{\small1\kern-3.3pt\normalsize1}}
\def\ad{\mathrm{ad\,}}
\def\bdiag{\mbox{block-diag\,}}
\def\diag{\mbox{diag\,}}
\def\bbbe\mathbb{E}
\def\bbbr{\mathbb{R}}
\def\bbbc{\mathbb{C}}
\def\bbbe{\mathbb{E}}
\def\bbbz{\mathbb{Z}}
\def\ad{\mbox{ad\,}}
\title[BEC and spectral properties of MNLS ]
      {Bose-Einstein Condensates and spectral properties of
      multicomponent nonlinear Schr\"odinger equations }
\author[Vladimir S. Gerdjikov]%1 and first-name2 last-name2]
{}
\subjclass{Primary: 35Q51, 37K40; Secondary: 34K17 }
\keywords{Bose-Einstein condensates, Multicomponent nonlinear Schr\"odinger equations,   Soliton solutions,
Soliton interactions, Reductions of MNLS}
 \email{gerjikov@inrne.bas.bg}
\begin{document}
\maketitle

% Enter the first author's name and address:
\centerline{\scshape Vladimir S. Gerdjikov }
\medskip
{\footnotesize
% please put the address of the first author
 \centerline{Institute for Nuclear Research and Nuclear Energy,  }
   \centerline{Bulgarian academy of sciences}
      \centerline{72 Tsarigradsko chaussee, 1784 Sofia, Bulgaria}
 } % Do not forget to end the {\footnotesize by the sign }

%\medskip

%\centerline{\scshape First-name2 last-name2 and First-name3 last-name3}
\medskip
%{\footnotesize
% please put the address of the second  and third author
% \centerline{ First line of the address of the second author}
%   \centerline{Other lines}
%   \centerline{Springfield, MO 65810, USA}}

\bigskip

% The name of the associate editor will be entered by an editorial staff
% \centerline{(Communicated by the associate editor name)}

%The abstract of your paper
\begin{abstract}
We analyze the properties of the soliton solutions of a class of models describing
one-dimensional BEC with spin $F$. We describe the minimal sets of scattering data
which determine uniquely both the corresponding potential of the Lax operator and
its scattering matrix. Next we give several reductions of these MNLS,  derive their
$N$-soliton solutions and analyze the soliton interactions. Finally we prove an important theorem
proving that if the initial conditions satisfy the reduction then one gets a solution
of the reduced MNLS.

\end{abstract}

%The title of your section 1

\section{INTRODUCTION}
\label{sec:intro}  % \label{} allows reference to this section

It is well known that Bose-Einstein condensate (BEC) of alkali atoms in the $F=1$ hyperfine state,
elongated in $x$ direction and confined in the transverse directions $y,z$ by purely optical means are described
by a  3-component normalized spinor wave vector  $ {\bf\Phi}(x,t)=(\Phi_1, \Phi_0 , \Phi_{-1})^{T}(x,t)$.
Considering dimensionless units and using special choices for the scattering lengths one can show that ${\bf\Phi}(x,t)$
satisfies the multicomponent nonlinear Schr\"{o}dinger (MNLS) equation \cite{IMW04},
 see also \cite{imww04,LLMML05,uiw06,dwy08,Kevre*08}:
\begin{eqnarray}\label{eq:1}
&& i\partial_{t} \Phi_{1}+\partial^{2}_{x} \Phi_{1}+2(|\Phi_{1}|^2
+2|\Phi_{0}|^2) \Phi_{1} +2\Phi_{-1}^{*}\Phi_{0}^2=0, \nonumber \\
&& i\partial_{t} \Phi_{0}+\partial^{2}_{x}
\Phi_{0}+2(|\Phi_{-1}|^2
+|\Phi_{0}|^2+|\Phi_{1}|^2) \Phi_{0} +2\Phi_{0}^{*}\Phi_{1}\Phi_{-1}=0,\\
&& i\partial_{t}\Phi_{-1}+\partial^{2}_{x}
\Phi_{-1}+2(|\Phi_{-1}|^2+ 2|\Phi_{0}|^2) \Phi_{-1}
+2\Phi_{1}^{*}\Phi_{0}^2=0. \nonumber
\end{eqnarray}
Similarly spinor BEC with $F=2$ is described by a  5-component normalized spinor wave vector
$ {\bf\Phi}(x,t)=(\Phi_2,\Phi_1, \Phi_0 , \Phi_{-1} , \Phi_{-2})^{T}(x,t)$.
For  specific choices of the scattering lengths in dimensionless coordinates the corresponding set of equations for
${\bf\Phi}(x,t)$ take the form \cite{uiw07}:
\begin{eqnarray}\label{eq:F2}
&&i\partial_t\Phi_{\pm 2}+\partial_{xx}\Phi_{\pm 2} +2 (\vec{{\bf \Phi}},\vec{{\bf \Phi^{*}}}) \Phi_{\pm 2} -
(2\Phi_{2}\Phi_{-2}-2 \Phi_{1}\Phi_{-1}+\Phi_{0}^2) \Phi_{\mp 2}^*=0,\nonumber \\
&&i\partial_t\Phi_{\pm 1}+\partial_{xx}\Phi_{\pm 1} +2 (\vec{{\bf \Phi}},\vec{{\bf \Phi^{*}}}) \Phi_{\pm 1} +
(2\Phi_{2}\Phi_{-2}-2 \Phi_{1}\Phi_{-1}+\Phi_{0}^2) \Phi_{\mp 1}^*=0,\\
&&i\partial_t\Phi_{0}+\partial_{xx}\Phi_{0} +2(\vec{{\bf \Phi}},\vec{{\bf \Phi^{*}}}) \Phi_{0} -
(2\Phi_{2}\Phi_{-2}-2 \Phi_{1}\Phi_{-1}+\Phi_{0}^2)\Phi_{0}^*=0.\nonumber
\end{eqnarray}

Both models have natural Lie algebraic interpretation and are related to the symmetric
spaces ${\bf BD.I}\simeq {\rm SO(n+2)}/{\rm SO(n)\times SO(2)}$ with $n=3$ and $n=5$ respectively.
They are integrable by means of inverse scattering transform
method \cite{ForKu*83,tw98,GKV*09}. Using a modification of the Zakharov-Shabat
`dressing method' we describe the  soliton solutions \cite{IMW04,kagg07} and the effects of the reductions on them.

Sections 2  contains the basic details on the direct and inverse scattering problems for the Lax
operator. Section 3 is devoted to the construction of their soliton solutions. In Section 4 we formulate the minimal
sets of scattering data $L$ which determine uniquely both the scattering matrix and the potential $Q(x,t)$.
Section 5 gives a few important examples of algebraic reductions of the MNLS. In Section 6  we analyze the soliton
interactions of the MNLS. To this end we evaluate the limits of the generic  two-soliton solution
for $t\to\pm\infty$. As a result we establish that the effect of the interactions on the soliton
parameters is analogous to the one for the scalar NLS equation and consists in shifts of the `center of mass'
and shift in the phase. In Section 7  we prove an important theorem proving that if the initial
conditions satisfy the reduction then one gets a solution of the reduced MNLS.

\section{The method for solving MNLS for any $F$}
\subsection{The Lax representation}

The above MNLS equations (\ref{eq:1}) and (\ref{eq:F2}) are the first two members of a series of MNLS
equations related to the {\bf BD.I}-type symmetric spaces. They allow Lax representation
as follows \cite{ForKu*83,GKKV*08,GKV*09}
\begin{equation}\label{eq:3.1}
\begin{aligned}
L\psi (x,t,\lambda ) &\equiv  i\partial_x\psi + U(x,t,\lambda)\psi  (x,t,\lambda )=0, &\quad \\
M\psi (x,t,\lambda ) &\equiv  i\partial_t\psi + V(x,t,\lambda)\psi  (x,t,\lambda )=0, &\quad
\end{aligned}
\end{equation}
where
\begin{equation}\label{eq:3.1'}
\begin{aligned}
U(x,t,\lambda) &= Q(x,t) - \lambda J , \\
V(x,t,\lambda) &= V_0(x,t) +\lambda V_1(x,t) - \lambda ^2 J, \\
V_1(x,t) &= Q(x,t), \qquad V_0(x,t) = i \ad_J^{-1} \frac{d Q}{dx}
+ \frac{1}{2} \left[\ad_J^{-1} Q, Q(x,t) \right].
\end{aligned}
\end{equation}
For those familiar with Lie algebras I remind that, as usual, $Q(x,t)$ and $J$ are elements of the corresponding
Lie algebra, which in our case is $\mathfrak{g}\simeq so(n+2)$. The choice of the Cartan  subalgebra element $J$
determines the co-adjoint orbit of $\mathfrak{g}$; in our case $J$ is dual to $e_1$, see \cite{Helg}. It introduces grading in
$\mathfrak{g}= \mathfrak{g}^{(0)}\oplus \mathfrak{g}^{(1)}$ where $\mathfrak{g}^{(0)}\simeq so(n)$. The root system of
$\mathfrak{g}^{(0)}$ consists of all roots of $so(n+2)$ which are orthogonal to $e_1$; the linear subspace $\mathfrak{g}^{(1)}$
is spanned by the Weyl generators $E_\alpha$ and $E_{-\alpha}$ for which the roots  $\alpha\in\Delta_1^+$
are such that their scalar products $(\alpha,e_1)=1$. Thus the potential
\begin{equation}\label{eq:Q}
Q(x,t) = \sum_{\alpha \in \Delta_1^+} (q_\alpha(x,t) E_\alpha + p_\alpha(x,t) E_{-\alpha })
\end{equation}
may be viewed as local coordinate of the above mentioned symmetric space.
The linear operator $\ad_J X=[J,X]$ and $\ad_J^{-1}$ is well defined  on the image of $\ad_J$ in $\fr{g}$.

In what follows we will use the typical representation of $so(n+2)$ with $n=2r-1$ in which $Q$ and $J$ take
the following block-matrix structure:
\begin{equation}\label{vec1}
Q(x,t)=\left(\begin{array}{ccc}  0 & \vec{q}^{T} & 0 \\
  \vec{p} & 0 & s_{0}\vec{q} \\  0 & \vec{p}{\,}^T s_{0} & 0 \\
\end{array}\right),\qquad J=\mbox{diag}(1,0,\ldots 0, -1).
\end{equation}
For physical applications one uses mostly potentials satisfying the typical reduction, i.e.
$\vec{p}(x,t) = \vec{q}{\,}^*(x,t)$. The  vector $\vec{q}(x,t)$  for integer $F=r$ has $2r+1$ components
\begin{equation}\label{eq:sok}
\vec{q}(x,t) = (\Phi_{r-1}, \dots, \Phi_0 ,\dots,  \Phi_{-r+1})^T(x,t),
\end{equation}
and the corresponding matrices $s_0  $ enter in the definition of the orthogonal algebras $so(2r-1)$; namely
$X\in so(2r+1)$ if
\begin{equation}\label{eq:z1.6a}
X + S_0 X^T S_0 =0, \quad S_0  = \sum_{s=1}^{2r+1} (-1)^{s+1} E_{s, n+1-s}, \quad
S_0 = \left(\begin{array}{ccc}  0 & 0 & 1 \\ 0 & -s_0 & 0 \\  1 & 0 & 0 \\ \end{array}\right).
\end{equation}
By $E_{sp}$ above we mean $2r+1\times 2r+1$ matrix with matrix elements $(E_{sp})_{ij}=\delta_{si}\delta_{pj}$.
With the definition of orthogonality used in (\ref{eq:z1.6a}) the Cartan generators $H_k=E_{k,k}
-E _{2r+2-k,2r+2-k}$ are represented by diagonal matrices.

If we make use of the typical reduction $Q=Q^{\dag}$ (or $\vec{p}{\,}^* =\vec{q}$)
the generic MNLS type equations related to ${\bf BD.I.}$ acquire the form:
\begin{equation}\label{eq:4.2}
i \vec{q}_t+ \vec{q}_{xx} + 2 (\vec{q}{\,}^\dag,\vec{q}) \vec{q} -
(\vec{q},s_0\vec{q}) s_0\vec{q}{\,}^* =0.
\end{equation}
and for $r=2$ (resp. $r=3$) coincides with the MNLS eq. (\ref{eq:1}) (resp. with eq. (\ref{eq:3.1})).
The Hamiltonians for the MNLS equations (\ref{eq:4.2})  are given by
\begin{eqnarray}\label{eq:Ham1}
H_{{\rm MNLS}}=\int_{-\infty}^\infty d x
\left((\partial_{x}\vec{q}{\,}^\dag,\partial_{x}\vec{q})- (\vec{q}{\,}^\dag,\vec{q})^2+ \frac{1}{2}
| (\vec{q}^T,s_{0}\vec{q})|^2\right).
\end{eqnarray}

\subsection{The Direct and the Inverse scattering problem for $L$}\label{sec:3}

We remind some basic features of the scattering theory for the
Lax operators $L$, see  \cite{GKKV*08,GKV*09}. There we have made use of the general theory
developed in \cite{ZaSh1,ZaSh2,ZMNP,FaTa,VSG1} and the references therein.
The Jost solutions  of $L$ are defined by:
\begin{equation}
\lim_{x \to -\infty} \phi(x,t,\lambda) e^{  i \lambda J x
}=\openone, \qquad  \lim_{x \to \infty}\psi(x,t,\lambda) e^{  i
\lambda J x } = \openone
\end{equation}
and the scattering matrix $T(\lambda,t)\equiv \psi^{-1}\phi(x,t,\lambda)$. The special choice of $J$ and the
fact that the Jost solutions and the scattering matrix take values
in the group $SO(2r+1)$ we can use the following block-matrix structure of $T(\lambda,t)$
\begin{equation}\label{eq:25.1}
T(\lambda,t) = \left( \begin{array}{ccc} m_1^+ & -\vec{b}^-{}^T & c_1^- \\
\vec{b}^+ & {\bf T}_{22} & - s_0\vec{B}^- \\ c_1^+ & \vec{B}^+{}^Ts_0 & m_1^- \\
\end{array}\right), \qquad \hat{T}(\lambda,t) = \left( \begin{array}{ccc} m_1^- & \vec{B}^-{}^T & c_1^- \\
-\vec{B}^+ & {\bf \hat{T}}_{22} &  s_0\vec{b}^- \\ c_1^+ & -\vec{b}^+{}^Ts_0 & m_1^+ \\
\end{array}\right),
\end{equation}
where $\vec{b}^\pm (\lambda,t)$ and $\vec{B}^\pm (\lambda,t)$ are
$2r-1$-component vectors, ${\bf T}_{22}(\lambda)$ is $2r-1 \times 2r-1$ block matrix, and $m_1^\pm
(\lambda)$, and $c_1^\pm (\lambda)$ are scalar functions. Below we often use  $\hat{X}$ to denote
the matrix inverse to $X$.

\begin{remark}\label{rem:1}
The typical reduction $\vec{p}(x,t) = \vec{q}{\,}^*(x,t)$ mentioned above imposes on $T(\lambda,) $ the constraint
$T^\dag (\lambda ,t) = \hat{T}(\lambda ,t)$ for real values of $\lambda \in \bbbr$, i.e.
\begin{equation}\label{eq:r1}
\begin{aligned}
m_1^+(\lambda ) &= m_1^{-,*}(\lambda ), &\qquad \vec{B}_1{\,}^-(\lambda ) &= \vec{b}_1{\,}^{+,*}(\lambda ), \\
c_1^+(\lambda ) &= c_1^{-,*}(\lambda ), &\qquad \vec{B}_1{\,}^+(\lambda ) &= \vec{b}_1{\,}^{-,*}(\lambda ).
\end{aligned}
\end{equation}
\end{remark}

The Lax representation (\ref{eq:1}) allows one to prove that if $\vec{q}(x,t)$ satisfies the MNLS (\ref{eq:4.2})
then the scattering matrix $T(\lambda,t)$ satisfies the linear evolution equation \cite{GKV*09}:
\begin{equation}\label{eq:dTdt}
i\frac{dT}{dt} - \lambda^2 [J, T(\lambda,t)]=0,
\end{equation}
or in components:
\begin{equation}\label{eq:evol}
\begin{aligned}
i\frac{d\vec{b}^{\pm}}{d t} \pm \lambda ^2
\vec{b}^{\pm}(t,\lambda ) &=0, & \qquad i\frac{d\vec{B}^{\pm}}{d t}
\pm \lambda ^2 \vec{B}^{\pm}(t,\lambda ) &=0, \\
i\frac{d m_1^{\pm}}{d t}  &=0, &\qquad  i \frac{d{\bf
m}_2^{\pm}}{d t}  &=0.
\end{aligned}
\end{equation}
Thus the block-diagonal matrices $D^{\pm}(\lambda)$ can be considered as generating functionals of the
integrals of motion. Thus the problem of solving the MNLS eq. is based on the effective analysis
of the mapping between the potential $Q(x,t)$ of $L$ and the scattering matrix $T(\lambda,t)$.

\subsection{The fundamental analytic solution and the Riemann-Hilbert problem}

The most effective method for the above mentioned analysis consists in constructing the fundamental analytic solution
(FAS) of $L$-operators of type (\ref{eq:3.1}) and reducing the  inverse scattering problem to an
equivalent Riemann-Hilbert problem (RHP). Skipping the details (see \cite{GKV*09b})
we just outline the construction of FAS for $L$. Obviously the FAS, like any other
fundamental solutions of $L$ must be linearly related to the Jost solutions. For the class of potentials
$Q(x,t)$ with vanishing boundary conditions there exist two FAS $\chi^\pm(x,t,\lambda)$ which allow analytic
extension for $\lambda\in\bbbc_\pm$ respectively. For real $\lambda$ they are related to the Jost solutions by
\begin{equation}\label{eq:FAS_J}
\chi ^\pm(x,t,\lambda)= \phi (x,t,\lambda) S_{J}^{\pm}(t,\lambda )
= \psi (x,t,\lambda ) T_{J}^{\mp}(t,\lambda ) D_J^\pm (\lambda),
\end{equation}
where $T_{J}^{\mp}(t,\lambda )$, $ D_J^\pm (\lambda)$ and $T_{J}^{\mp}(t,\lambda )$ are the generalized Gauss factors
of $T(\lambda,t)$, see \cite{ZMNP,VSG2,TMF98}:
\begin{equation}\label{eq:25.1'}
\begin{aligned}
T(\lambda,t) &= T^-_J D^+_J \hat{S}^+_J , &\quad  T(\lambda,t) &= T^+_J D^-_J \hat{S}^-_J ,\\
T^\mp_J (\lambda,t) &= e^{\pm \left(\vec{\rho}^\pm , \vec{E}^\mp_1 \right)}  , &
S^\pm _J(\lambda,t)  &= e^{\pm \left(\vec{\tau}^\pm , \vec{E}^\pm_1 \right)} ,  & \\
D_J^\pm (\lambda)  &= \diag \left( (m_1^\pm)^{\pm 1} , {\bf m}_2^\pm , (m_1^\pm)^{\mp1} \right),
\end{aligned}
\end{equation}
Here
\begin{equation}\label{eq:25.1''}
\begin{aligned}
\vec{\tau}^\pm (\lambda,t) &= \left( \tau^\pm_{r-1},\dots, \tau^\pm_0, \dots, \tau^\pm_{-r+1} \right)^T (\lambda,t), \\
\left(\vec{\tau}^+ , \vec{E}^+_1 \right) &= \sum_{k=1}^{r-1} (\tau^+_{k} E_{e_1-e_{k+1}} + \tau^+_{-k}
E_{e_1+e_{k+1}}) + \tau^+_{0} E_{e_1}, \\  \left(\vec{\tau}^- , \vec{E}^-_1 \right) &=
\sum_{k=1}^{r-1} (\tau^-_{k} E_{-e_1+e_{k+1}} + \tau^-_{-k} E_{-e_1-e_{k+1}}) + \tau^-_{0} E_{-e_1},
\end{aligned}
\end{equation}
and similar expressions for $\left(\vec{\rho}^\pm , \vec{E}^\mp_1 \right)$.
Above we have made use of the fact that $\Delta_1^+$ consists of the roots
$\{ e_1-e_k, e_1, e_1+e_k\}_{k=1}^{r-1} $.
The functions $m_1^\pm$ and $n\times n$ matrix-valued functions ${\bf m}_2^\pm$
are analytic for $\lambda\in\bbbc_\pm$. One can check, that the analogs of the reflection coefficients
$\vec{\rho}^\pm$ and  $\vec{\tau}^\pm$ are expressed by:
\begin{equation*}\label{eq:25.1a}
%\begin{aligned}
\vec{\rho}^- =\frac{\vec{B}^-}{m_1^-},  \qquad \vec{\tau}^-
=\frac{\vec{B}^+}{m_1^-}, \qquad \vec{\rho}^+
=\frac{\vec{b}^+}{m_1^+},  \qquad \vec{\tau}^+
=\frac{\vec{b}^-}{m_1^+}.
%\end{aligned}
\end{equation*}

\begin{remark}\label{rem:2}
The typical reduction means that for $\lambda \in \bbbr$ the reflection coefficients are
constrained by (see remark \ref{rem:1} above):
\begin{equation}\label{eq:r2}
\vec{\rho}{\,}^+(\lambda ) =\vec{\rho}{\,}^{-,*}(\lambda ) , \qquad
\vec{\tau}{\,}^+(\lambda ) =\vec{\tau}{\,}^{-,*}(\lambda ) , \quad \lambda \in \bbbr.
\end{equation}
\end{remark}

There are  some additional relations which ensure that
both $T(\lambda)$ and its inverse $\hat{T}(\lambda)$ belong to the
orthogonal group $SO(2r+1)$ and that $T(\lambda)\hat{T}(\lambda) =\openone$.

The FAS $\chi^\pm(x,t,\lambda)$ are related by:
\begin{equation}\label{eq:rhp0}
\chi^+(x,t,\lambda) =\chi^-(x,t,\lambda) G_{0,J}(\lambda,t),
\qquad G_{0,J}(\lambda,t)=\hat{S}^-_J(\lambda,t)S^+_J(\lambda,t)
\end{equation}

Below for convenience we introduce  $\xi^\pm(x,\lambda)=\chi^\pm(x,\lambda) e^{i\lambda Jx}$
which satisfy the equation:
\begin{equation}\label{eq:xi}
i\frac{d\xi^\pm}{dx} + Q(x)\xi^\pm(x,\lambda) -\lambda [J, \xi^\pm(x,\lambda)]=0,
\end{equation}
and  the relation
\begin{equation}\label{eq:rh-n}
\lim_{\lambda \to \infty} \xi^\pm(x,t,\lambda) = \openone, \qquad
\end{equation}
Then $\xi^\pm(x,\lambda)$ satisfy the RHP's
\begin{equation}\label{eq:rhp1}
\begin{split}
\xi^+(x,t,\lambda) &=\xi^-(x,t,\lambda) G_J(x,t,\lambda), \\
G_{J}(x,t,\lambda) &=e^{-i\lambda J(x+\lambda t)}G_{0,J}(\lambda)e^{i\lambda J(x+\lambda t)} ,
\end{split}
\end{equation}
with  sewing function $G_J(x,t,\lambda)$    uniquely determined by the Gauss factors $S_J^\pm (t,\lambda)$
taken for $t=0$:
\[ G_{0,J}(\lambda) = \hat{S}_J^-(0,\lambda) S_J^+(0,\lambda). \]
The analyticity properties of these FAS follow from the equivalent set of integral equations:
\begin{equation}\label{eq:rhpi}
\begin{aligned}
\xi^+_{1j}(x,\lambda ) &= \delta _{1j} + i \int_{\infty }^{x} dy e^{-i\lambda (x-y)} \sum_{p=1}^{2r-1} q_{r-p}(y)
\xi^+_{p+1,j}(y,\lambda ), & & \\
\xi^+_{kj}(x,\lambda ) &= \delta_{kj} + i \int_{-\infty }^{x} dy ( q_{r-k+1}^*(y) \xi^+_{1,j}(y,\lambda ) \\
& \qquad -(-1)^{r+k}  q_{-r+k-1}(y) \xi^+_{2r+1,j}(y,\lambda )), \qquad 2\leq k,j \leq 2r ; \\
\xi^+_{2r+1,j}(x,\lambda ) &= \delta _{2r+1,j} + i \int_{-\infty }^{x} dy e^{i\lambda (x-y)} \sum_{p=1}^{2r-1}
(-1)^{p+1} q_{-r+p}^*(y) \xi^+_{p+1,j}(y,\lambda),
\end{aligned}
\end{equation}
and a similar set of integral equations for
\begin{equation}\label{eq:rhp2}
\begin{aligned}
\xi^-_{1j}(x,\lambda ) &= \delta _{1j} + i \int_{-\infty }^{x} dy e^{-i\lambda (x-y)} \sum_{p=1}^{2r-1} q_{r-p}(y)
\xi^-_{p+1,j}(y,\lambda ), & & \\
\xi^-_{kj}(x,\lambda ) &= \delta_{kj} + i \int_{-\infty }^{x} dy ( q_{r-k+1}^*(y) \xi^-_{1,j}(y,\lambda )
\\ & \qquad -(-1)^{r+k}  q_{-r+k-1}(y) \xi^-_{2r+1,j}(y,\lambda )),  \qquad 2\leq k,j \leq 2r ; \\
\xi^-_{2r+1,j}(x,\lambda ) &= \delta _{2r+1,j} + i \int_{\infty }^{x} dy e^{i\lambda (x-y)} \sum_{p=1}^{2r-1}
(-1)^{p+1} q_{-r+p}^*(y) \xi^-_{p+1,j}(y,\lambda),
\end{aligned}
\end{equation}

The RHP (\ref{eq:rhp1}) with the additional condition (\ref{eq:rh-n}) is known as an RHP with
canonical normalization.

\begin{remark}\label{rem:1'}
An immediate consequence of the analyticity of $\xi^\pm(x,t,\lambda)$ is that $D^\pm(\lambda)$ are
analytic functions for $\lambda\in\bbbc_\pm$. This fact follows from the relation
$\lim_{x\to\infty}\xi^\pm(x,\lambda)= D^\pm(\lambda)$.
\end{remark}

Zakharov and Shabat proved a theorem \cite{ZaSh1,ZaSh2} which states that  if $G_J(x,\lambda,t)$  satisfies:
\begin{equation}\label{eq:G1}
\begin{split}
i\frac{dG}{dx} - \lambda [J,G(x,\lambda,t)] &=0, \\
i\frac{dG}{dt} - \lambda^2 [J,G(x,\lambda,t)] &=0,
\end{split}
\end{equation}
then the corresponding solutions of the RHP allow one to construct $\chi^\pm(x,\lambda)=\xi^\pm(x,\lambda) e^{-i\lambda Jx}$
as a fundamental solution of the Lax pair eq. (\ref{eq:1}).

We will say that $\xi^\pm_0(x,\lambda)$ is a regular solution to the RHP (\ref{eq:rhp1}) if
the block-diagonal part of it  has neither zeroes nor poles in its whole region of analyticity.

If we have solved the RHP's  and know the FAS $\xi^+(x,t,\lambda)$ then the formula
\begin{equation}\label{eq:XI-Q}
    Q(x,t) = \lim_{\lambda\to\infty} \lambda \left( J- \xi^+(x,t,\lambda)J\hat{\xi}^+(x,t,\lambda)
    \right),
\end{equation}
allows us to recover the corresponding potential of $L$.

\section{Singular solutions of RHP and soliton solutions of MNLS}

Zakharov-Shabat's theorem ensures that if a given  RHP allows regular solution,
then this solution is unique. However the RHP may have many singular solutions.
The construction of such singular solutions starting from a given regular one is
known as the dressing Zakharov-Shabat method \cite{ZaSh1,ZaSh2}. Indeed, if $\xi_0^\pm(x,t,\lambda)$
are regular solutions to the RHP, then
\begin{equation}\label{eq:uxi}
\xi^\pm(x,t,\lambda) = u(x,t,\lambda) \xi_0^\pm(x,t,\lambda)
\end{equation}
with conveniently chosen dressing factor $u(x,t,\lambda)$ may again be a solution of the RHP \cite{ZaSh1,ZaSh2}.
Obviously this factor must be analytic (with the exception of finite number of singular points) in the whole
complex $\lambda$-plane and can explicitly be constructed using only the solution of the regular RHP.

In order to obtain $N$-soliton solutions one has to apply the dressing procedure to the trivial
solution of the RHP $\xi_0(x,t,\lambda)=\openone$. We choose a dressing factor with  $2N$-poles \cite{GKV*09b}:
\begin{equation}
u(x,t,\lambda)=\openone+\sum^N_{k=1}\left(\frac{A_k(x,t)}{\lambda-\lambda^+_k}
+\frac{B_k(x,t)}{\lambda-\lambda^-_k}\right).	
\label{dressfac_bcd_n}\end{equation}
The $N$-soliton solution itself can be generated via the following formula
\begin{equation}
Q_{N,\rm s}(x,t)= \sum^N_{k=1}[J,A_k(x,t)+B_k(x,t)].	
\label{dressq_bcd_n}\end{equation}
The dressing
factor $u(x,\lambda ) $ must satisfy the equation
\begin{equation}\label{eq:u-eq}
i\frac{\partial u}{\partial x} + Q_{N,\rm s}(x,t) u(x,t,\lambda) -  \lambda [J,u(x,t,\lambda)] =0
\end{equation}
and the normalization condition $\lim_{\lambda \to\infty }
u(x,\lambda ) =\openone $.

The residues of $u$ admit the following decomposition
\[A_k(x,t)=X_k(x,t)F^T_k(x,t),\qquad B_k(x,t)=Y_k(x,t){G}^T_k(x,t), \]
where all matrices involved are supposed to be rectangular and of maximal rank $s$  \cite{ZaMi,GGK05a}.
By comparing the coefficients before the same powers of $\lambda-\lambda^{\pm}_k$
in (\ref{eq:u-eq}) we convince ourselves that the factors $F_k$ and $G_k$ can be
expressed by the fundamental analytic solutions $\chi^{\pm}_0(x,t,\lambda) = e^{-i\lambda(x+\lambda t)J}$ as follows
\[F^T_k(x,t)=F^T_{k,0}[\chi^{+}_0(x,t,\lambda^+_k)]^{-1},\qquad G^T_k(x,t)=G^T_{k,0}[\chi^{-}_0(x,t,\lambda^-_k)]^{-1}.\]
The constant rectangular matrices $F_{k,0}$ and $G_{k,0}$ obey the algebraic relations
\[F^T_{k,0}S_0F_{k,0}= 0,\qquad G^T_{k,0}S_0G_{k,0}=0.\]

The other two types of factors $X_k(x,t)$ and $Y_k(x,t)$ are solutions to the algebraic system
\begin{equation}\label{eq:XY}
\begin{split}
S_0F_k &=X_k\alpha_k+\sum_{l\neq k}\frac{X_lF^T_lS_0F_k}{\lambda^+_l-\lambda^+_k}
+\sum_l\frac{Y_lG^T_lS_0F_k}{\lambda^-_l-\lambda^+_k},\\
S_0G_k &=\sum_{l}\frac{X_lF^T_lS_0G_k}{\lambda^+_l-\lambda^-_k}+Y_k\beta_k
+\sum_{l\neq k}\frac{Y_lG^T_lS_0G_k}{\lambda^-_l-\lambda^-_k}.
\end{split}
\end{equation}
The square $s\times s$ matrices $\alpha_k(x,t)$ and $\beta_k(x,t)$ introduced above depend on
$\chi^+_0$ and $\chi^-_0$ and their derivatives by $\lambda$ as follows
\begin{equation}\label{eq:XY-alf}
\begin{split}
\alpha_k(x,t)&=-F^T_{0,k}[\chi^{+}_0(x,t,\lambda^{+}_k)]^{-1}
\partial_{\lambda}\chi^{+}_0(x,t,\lambda^{+}_k)S_0F_{0,k}+\alpha_{0,k},\\
\beta_k(x,t)&=-G^T_{0,k}[\chi^{-}_0(x,t,\lambda^{-}_k)]^{-1}
\partial_{\lambda}\chi^{-}_0(x,t,\lambda^{-}_k)S_0G_{0,k}+\beta_{0,k}.
\end{split}
\end{equation}

Below for simplicity we will choose $F_k$ and $G_k$ to be $2r+1$-component vectors. Then one can show %\cite{Valch}
that $\alpha_k=\beta_k=0$ which simplifies the system (\ref{eq:XY}). We also introduce the following
more convenient parametrization for $F_k$ and $G_k$, namely (see eq. (\ref{eq:njxt})):
\begin{equation}\label{eq:FkGk}
\begin{split}
F_k(x,t) &= S_0|n_k(x,t)\rangle = \left( \begin{array}{c} e^{-z_k+i\phi_k} \\
-\sqrt{2} s_0 \vec{\nu}_{0k} \\ e^{z_k-i\phi_k} \end{array} \right), \\
G_k(x,t) &= |n_k^*(x,t)\rangle = \left( \begin{array}{c} e^{z_k+i\phi_k} \\
\sqrt{2} \vec{\nu}{\,}^*_{0k} \\ e^{-z_k-i\phi_k} \end{array} \right),
\end{split}
\end{equation}
where $\vec{\nu}_{0k}$ are constant $2r-1$-component polarization vectors and
\begin{equation}\label{eq:njxt}
\begin{split}
z_j = \nu_j(x+2\mu_j t) + \xi_{0j} , &\qquad \phi_j = \mu_j x+(\mu_j^2-\nu_j^2) t +\delta_{0j}, \\
\langle n_j^T(x,t)| S_0|n_j(x,t)\rangle =0, \qquad &\mbox{or} \qquad (\vec{\nu}_{0,j} s_0 \vec{\nu}_{0,j}) = 1.
\end{split}
\end{equation}
With this notations the polarization vectors automatically satisfy the condition $\langle n_j (x,t) |S_0| n_j(x,t)\rangle =0$.
Thus for $N=1$ we get the system:
\begin{equation}\label{eq:N1}
|Y_1\rangle = -\frac{ (\lambda_1^+ -\lambda_1^-) |n_1\rangle}{\langle n_1^\dag | n_1\rangle} , \qquad
|X_1\rangle = \frac{ (\lambda_1^+ -\lambda_1^-) S_0 |n_1^*\rangle}{\langle n_1^\dag | n_1\rangle} ,
\end{equation}
which is easily solved. As a result for the one-soliton solution we get:
\begin{equation}\label{eq:1s}
\vec{q}_{\rm 1s} = -\frac{i \sqrt{2}(\lambda_1^+ -\lambda_1^-) e^{-i\phi_1} }{\Delta_1}
\left( e^{-z_1} s_0|\vec{\nu}_{01}\rangle + e^{z_1} |\vec{\nu}_{01}^*\rangle \right), \quad \Delta_1 =
\cosh (2z_1) +  \langle\vec{\nu}_{01}^\dag   |\vec{\nu}_{01}\rangle .
\end{equation}

For $n=3$ we put $\nu_{0k} = |\nu_{0k}|e^{\alpha_{0k}}$ and get:
\begin{equation}\label{eq:1s-1}
\begin{split}
\Phi_{{\rm 1s};\pm 1} &= -\frac{\sqrt{2 |\nu_{01;1}\nu_{01;3}|}(\lambda_1^+ -\lambda_1^-)  }{\Delta_1} e^{-i\phi_1 \pm i\beta_{13}} \\
& \times \left( \cosh(z_1 \mp \zeta_{01}) \cos(\alpha_{13}) - i\sinh(z_1 \mp \zeta_{01}) \sin(\alpha_{13})   \right), \\
\Phi_{{\rm 1s};0} &= -\frac{\sqrt{2} |\nu_{01;2}|(\lambda_1^+ -\lambda_1^-)  }{\Delta_1} e^{-i\phi_1 }
\left( \sinh{z_1 } \cos(\alpha_{02}) +i\cosh{z_1 } \sin(\alpha_{02})   \right), \\
\beta_{13} &= \frac{1}{2} (\alpha_{03}-\alpha_{01}) , \qquad \zeta_{01} = \frac{1}{2} \ln  \frac{|\nu_{01;3}|}{|\nu_{01;1}|},
\qquad \alpha_{13} = \frac{1}{2} (\alpha_{03}+\alpha_{01}) ,
\end{split}
\end{equation}
Note that the `center of mass` of $\Phi_{{\rm 1s};1}$  (resp. of $\Phi_{{\rm 1s};-1}$) is shifted with respect to the one of
$\Phi_{{\rm 1s};0}$ by $\zeta_{01}$ to the right (resp to the left); besides $|\Phi_{{\rm 1s};1}|=|\Phi_{{\rm 1s};-1}|$, i.e. they have
the same amplitudes.

For $n=5$ we put $\nu_{0k} = |\nu_{0k}|e^{\alpha_{0k}}$ and get analogously:
\begin{equation}\label{eq:1s-3}
\begin{split}
\Phi_{{\rm 1s};\pm 2} &= -\frac{\sqrt{2 |\nu_{01;1}\nu_{01;5}|}(\lambda_1^+ -\lambda_1^-)  }{\Delta_1} e^{-i\phi_1\pm i\beta_{15}} \\
& \times \left( \cosh(z_1\mp \zeta_{01}) \cos(\alpha_{15}) - i\sinh(z_1\mp \zeta_{01}) \sin(\alpha_{15})   \right), \\
\Phi_{{\rm 1s};\pm 1} &= \frac{\sqrt{2 |\nu_{01;2}\nu_{01;4}|}(\lambda_1^+ -\lambda_1^-)  }{\Delta_1} e^{-i\phi_1\pm i\beta_{24}}\\
& \times \left( \cosh(z_1\mp \zeta_{02}) \cos(\alpha_{24}) -i\sinh(z_1\mp \zeta_{01}) \sin(\alpha_{24})   \right),\\
\Phi_{{\rm 1s};0} &= -\frac{\sqrt{2} |\nu_{01;3}|(\lambda_1^+ -\lambda_1^-)  }{\Delta_1} e^{-i\phi_1 }
\left( \cosh{z_1 } \cos(\alpha_{03}) -i\sinh{z_1 } \sin(\alpha_{03})   \right), \\
\beta_{15} &= \frac{1}{2} (\alpha_{05}-\alpha_{01}) , \qquad \zeta_{01} = \frac{1}{2} \ln  \frac{|\nu_{01;5}|}{|\nu_{01;1}|},
\qquad \alpha_{15} = \frac{1}{2} (\alpha_{05}+\alpha_{01}) , \\
\beta_{24} &= \frac{1}{2} (\alpha_{04}-\alpha_{02}) , \qquad \zeta_{02} = \frac{1}{2} \ln  \frac{|\nu_{01;4}|}{|\nu_{01;2}|},
\qquad \alpha_{24} = \frac{1}{2} (\alpha_{04}+\alpha_{02}) .
\end{split}
\end{equation}
Similarly the `center of mass` of $\Phi_{{\rm 1s};2}$  and  $\Phi_{{\rm 1s};1}$  (resp. of $\Phi_{{\rm 1s};-2}$ and
$\Phi_{{\rm 1s};-1}$) are shifted with respect to the one of $\Phi_{{\rm 1s};0}$ by $\zeta_{01}$  and $\zeta_{02}$
to the right (resp to the left); besides $|\Phi_{{\rm 1s};2}|=|\Phi_{{\rm 1s};-2}|$ and $|\Phi_{{\rm 1s};1}|=|\Phi_{{\rm 1s};-1}|$.

For $N=2$ we get:
\begin{equation}\label{eq:n-nst}
\begin{split}
|n_1(x,t)\rangle & = \frac{X_2(x,t) f_{21}}{\lambda_2^+ -\lambda_1^+} +
\frac{Y_1(x,t) \kappa_{11}}{\lambda_1^- -\lambda_1^+} + \frac{Y_2(x,t) \kappa_{21}}{\lambda_2^- -\lambda_1^+}, \\
|n_2(x,t)\rangle & = \frac{X_1(x,t) f_{12}}{\lambda_1^+ -\lambda_2^+} +
\frac{Y_1(x,t) \kappa_{12}}{\lambda_1^- -\lambda_2^+} + \frac{Y_2(x,t) \kappa_{22}}{\lambda_2^- -\lambda_2^+}, \\
S_0|n_1^*(x,t)\rangle & = \frac{X_1(x,t) \kappa_{11}}{\lambda_2^+ -\lambda_1^+} +
\frac{X_2(x,t) \kappa_{11}}{\lambda_2^+ -\lambda_1^-} + \frac{Y_2(x,t) f_{21}^*}{\lambda_2^- -\lambda_1^-}, \\
S_0|n_2^*(x,t)\rangle & = \frac{X_1(x,t) \kappa_{21}}{\lambda_1^+ -\lambda_2^-} +
\frac{X_2(x,t) \kappa_{22}}{\lambda_2^+ -\lambda_2^-} + \frac{Y_1(x,t) f_{12}^*}{\lambda_1^- -\lambda_2^-},
\end{split}
\end{equation}
where
\begin{equation}\label{eq:n-nst2}
\begin{split}
\kappa_{kj}(x,t)& =  e^{z_k+z_j+i(\phi_k-\phi_j)} +e^{-z_k-z_j-i(\phi_k-\phi_j)}
+ 2 \left( \vec{\nu}{\,}^\dag_{0k},\vec{\nu}_{0j}\right) ,\\
f_{kj}(x,t) &= e^{z_k-z_j-i(\phi_k-\phi_j)} +e^{z_j-z_k+i(\phi_k-\phi_j)}
- 2 \left( \vec{\nu}_{0k}^T s_0\vec{\nu}_{0j}\right) ,
\end{split}
\end{equation}

In other words:
\begin{equation}\label{eq:MX-n}
\mathcal{M} \vec{X} \equiv \left( \begin{array}{cccc} 0 & \frac{f_{21}}{\lambda_2^+ -\lambda_1^+} &
\frac{\kappa_{11}}{\lambda_1^- -\lambda_1^+} & \frac{\kappa_{21}}{\lambda_2^- -\lambda_1^+} \\
\frac{f_{12}}{\lambda_1^+ -\lambda_2^+} & 0 & \frac{\kappa_{12}}{\lambda_1^- -\lambda_2^+} &
\frac{\kappa_{22}}{\lambda_2^- -\lambda_2^+} \\
\frac{\kappa_{11}}{\lambda_1^+ -\lambda_1^-} & \frac{\kappa_{12}}{\lambda_2^+ -\lambda_1^-} & 0
& \frac{f_{21}^*}{\lambda_2^- -\lambda_1^-}  \\
\frac{\kappa_{21}}{\lambda_1^+ -\lambda_2^-} & \frac{\kappa_{22}}{\lambda_2^+ -\lambda_2^-}
& \frac{f_{12}^*}{\lambda_1^- -\lambda_2^-} & 0 \\ \end{array} \right) \left( \begin{array}{c} X_1 \\ X_2 \\
Y_1 \\ Y_2 \\ \end{array} \right) = \left( \begin{array}{c} |n_1\rangle \\ |n_2\rangle  \\
S_0|n_1^*\rangle  \\ S_0|n_2^*\rangle \\ \end{array} \right) .
\end{equation}
We can rewrite $\mathcal{M}$ in block-matrix form:
\begin{equation}\label{eq:MM}
\begin{split}
\mathcal{M} &= \left( \begin{array}{cc} \mathcal{M}_{11} & \mathcal{M}_{12} \\ \mathcal{M}_{21} & \mathcal{M}_{22} \end{array}
\right) , \qquad \mathcal{M}_{22} = \mathcal{M}_{11}^*, \qquad  \mathcal{M}_{21} = -\mathcal{M}_{12}^T, \\
\mathcal{M}_{11} &=  \frac{f_{12}}{\lambda_2^+ -\lambda_1^+} \left( \begin{array}{cc} 0 & 1 \\ -1 & 0 \\ \end{array} \right),
\qquad   \mathcal{M}_{12} =    \left( \begin{array}{cc} \frac{\kappa_{11}}{\lambda_1^- -\lambda_1^+} &
\frac{\kappa_{21}}{\lambda_2^- -\lambda_1^+} \\ \frac{\kappa_{12}}{\lambda_1^- -\lambda_2^+} & \frac{\kappa_{22}}{\lambda_2^- -\lambda_2^+} \\ \end{array} \right).
\end{split}
\end{equation}
The inverse of $ \mathcal{M}  $ is given by:
\begin{equation}\label{eq:MM-1}
\begin{split}
\mathcal{M}^{-1} &= \left( \begin{array}{cc}
\mathcal{N}_1 ^{-1} & - \mathcal{N}_{1}^{-1}\mathcal{M}_{12} \hat{\mathcal{M}}_{11}^*\\
-\mathcal{N}_2^{-1}  \mathcal{M}_{21}  \hat{\mathcal{M}}_{11} & \mathcal{N}_2 ^{-1} \end{array}
\right) , \\
\mathcal{N}_1 &= \mathcal{M}_{11} - \mathcal{M}_{12}\hat{\mathcal{M}}_{11}^* \mathcal{M}_{21} , \qquad
\mathcal{N}_2 = \mathcal{M}_{11}^* - \mathcal{M}_{21}\hat{\mathcal{M}}_{11} \mathcal{M}_{12}
\end{split}
\end{equation}

From eqs. (\ref{eq:MX-n}) and (\ref{eq:MM-1}) we obtain \cite{GKV*09}:
\begin{equation}\label{eq:MM1'}
\begin{split}
|X_1\rangle &= \frac{1}{Z} \left( \frac{f_{12}^*}{\lambda_1^- -\lambda_2^-} |n_2\rangle -
\frac{\kappa_{22}}{\lambda_2^+ -\lambda_2^-} S_0 |n_1^*\rangle + \frac{\kappa_{12}}{\lambda_2^+
-\lambda_1^-} S_0 |n_2^*\rangle  \right), \\
|X_2\rangle &= \frac{1}{Z} \left( -\frac{f_{12}^*}{\lambda_1^- -\lambda_2^-} |n_1\rangle +
\frac{\kappa_{21}}{\lambda_1^+ -\lambda_2^-} S_0 |n_1^*\rangle - \frac{\kappa_{11}}{\lambda_1^+
-\lambda_1^-} S_0 |n_2^*\rangle  \right), \\
|Y_1\rangle &= \frac{1}{Z} \left( \frac{\kappa_{22}}{\lambda_2^+ -\lambda_2^-} |n_1\rangle -
\frac{\kappa_{21}}{\lambda_1^+ -\lambda_2^-} |n_2\rangle - \frac{f_{12}}{\lambda_1^+
-\lambda_2^+} S_0 |n_2^*\rangle  \right), \\
|Y_2\rangle &= \frac{1}{Z} \left( -\frac{\kappa_{12}}{\lambda_2^+ -\lambda_1^-} |n_1\rangle +
\frac{\kappa_{11}}{\lambda_1^+ -\lambda_1^-} |n_2\rangle + \frac{f_{12}}{\lambda_2^+
-\lambda_1^+} S_0 |n_1^*\rangle  \right),
\end{split}
\end{equation}
where
\begin{equation}\label{eq:Z}
Z=  \left( \frac{|f_{12}|^2}{|\lambda_2^+-\lambda_1^+|^2} -
\frac{\kappa_{12}\kappa_{21}}{|\lambda_2^+-\lambda_1^-|^2}  +\frac{\kappa_{11}\kappa_{22}}{4\nu_1\nu_2} \right).
\end{equation}

Inserting this result into eq. (\ref{dressq_bcd_n}) we obtain the following expression for the 2-soliton
solution of the MNLS:
\begin{equation}\label{eq:2-sol}
\begin{split}
Q_{\rm 2s}(x,t) &= [J, A_1+B_1+A_2+B_2] = \frac{1}{Z} [J , C(x,t) - S_0C^T(x,t) S_0], \\
C(x,t) &=  \frac{\kappa_{22}}{\lambda_2^+ -\lambda_2^-} |n_1\rangle  \langle n_1^\dag | - \frac{\kappa_{12}}{\lambda_2^+ -\lambda_1^-}
|n_1\rangle \langle n_2^\dag | - \frac{\kappa_{21}}{\lambda_1^+ -\lambda_2^-} |n_2\rangle  \langle n_1^\dag |
\\ & + \frac{\kappa_{11}}{\lambda_1^+ -\lambda_1^-} |n_2\rangle  \langle n_2^\dag |   -
\frac{f_{12}^*}{\lambda_1^- -\lambda_2^-} |n_1\rangle \langle n_2| S_0 - \frac{f_{12}}{\lambda_1^+ -\lambda_2^+}
S_0|n_2^*\rangle  \langle n_1^\dag | .
\end{split}
\end{equation}

\section{The minimal sets of scattering data}

It is well known that the locations of the singularities of the RHP
$ \lambda_k^\pm \in \bbbc_\pm$ are zeroes of the functions $m_1^\pm(\lambda)$
and discrete eigenvalues of the Lax operator $L$. We will say that these
eigenvalues are simple if the corresponding eigensubspaces are one dimensional.
This corresponds to our choice of $F_k(x,t)$ and $G_k(x,t)$ as vectors.
Eigensubspaces of higher multiplicities $s>1$ can be obtained choosing  $F_k(x,t)$ and $G_k(x,t)$
as $s\times (2r+1)$ matrices of rank $s$.

\begin{theorem}\label{lem:ms}
Let the potential $Q(x,t)$ be such that the corresponding Lax operator $L$
has finite number of simple discrete eigenvalues located at the points
$ \lambda_k^\pm \in \bbbc_\pm$ respectively, $k=1,\dots, N$. Then as minimal sets of
scattering data  uniquely  determining both the scattering matrix $T(\lambda,t)$ and
the corresponding potential $Q(x,t)$ one can consider the sets
\begin{equation}\label{eq:T_i}
\begin{split}
& \mathfrak{T}_1 \equiv \{\vec{\tau}{\,}^{+}(\lambda,0), \quad \lambda \in \bbbr ;
\qquad \vec{\tau}_k^+ ,\;\lambda_k^+ \in \bbbc_+,\; k=1,\dots N \},
\\
&\mathfrak{T}_2 \equiv \{ \vec{\rho}{\,}^{\pm}(\lambda,0),\quad   \lambda
\in \bbbr; \qquad \vec{\rho}_k^+ , \; \lambda_k^+\in \bbbc_+ , \; k=1,\dots N\},
\end{split}\end{equation}
where the constant vectors $\vec{\tau}_{0k}^+ $ and $\vec{\rho}_{0k}^\pm $
\begin{equation}\label{eq:nknk1}
\begin{split}
\vec{\tau}{\,}^+_{0k} =  \left( \begin{array}{c} e^{\xi_{0k}-i\delta _{0k}} \\
\sqrt{2} \vec{\nu}_{0k} \\ e^{-\xi_{0k}+i\delta _{0k}} \end{array} \right), \qquad
\vec{\rho}{\,}^+_{0k} =  \left( \begin{array}{c} e^{\eta_{0k}-i\theta _{0k}} \\
\sqrt{2} \vec{\mu}_{0k} \\ e^{-\eta_{0k}+i\theta _{0k}} \end{array} \right),
\end{split}
\end{equation}
and the vectors $\vec{\nu}_{0k}$ and $\vec{\mu}_{0k}$ satisfy the normalization
condition  $(\vec{\nu}_{0k}^T s_0\vec{\nu}_{0k})=1$ and $(\vec{\mu}_{0k}^T s_0\vec{\mu}_{0k})=1$.
\end{theorem}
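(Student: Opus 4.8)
The plan is to prove the statement by running the inverse scattering diagram $Q(x,t)\leftrightarrow T(\lambda,t)$ in both directions: first show that each of $\mathfrak{T}_1,\mathfrak{T}_2$ reconstructs the whole scattering matrix $T(\lambda,0)$ (hence $T(\lambda,t)$ through the trivial evolution \eqref{eq:dTdt}), and then that it reconstructs $Q(x,t)$; the two halves are glued by the solvability and uniqueness of the RHP \eqref{eq:rhp1}. Throughout I keep the typical reduction $Q=Q^\dagger$ in force, so that Remarks~\ref{rem:1} and~\ref{rem:2} apply and only the ``$+$''--labelled discrete data $\lambda_k^+$, $\vec\tau_k^+$ (resp. $\vec\rho_k^+$) are needed, the ``$-$''--data being their complex conjugates.

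\emph{Step 1 — recovering $T(\lambda,0)$.} In the block form \eqref{eq:25.1} the off-diagonal data of $T$ are $\vec b^+=m_1^+\vec\rho^+$, $\vec b^-=m_1^+\vec\tau^+$, $\vec B^-=m_1^-\vec\rho^-$, $\vec B^+=m_1^-\vec\tau^-$, and the $SO(2r+1)$--membership \eqref{eq:z1.6a} forces the scalars $c_1^\pm$ to be quadratic in $\vec b^\pm$ and ${\bf T}_{22}$ to be fixed by $T\hat T=\openone$; hence it suffices to recover the block-diagonal Gauss factors $D_J^\pm(\lambda)=\diag((m_1^\pm)^{\pm1},{\bf m}_2^\pm,(m_1^\pm)^{\mp1})$ of \eqref{eq:25.1'}. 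By Remark~\ref{rem:1'} the $D_J^\pm$ are analytic in $\bbbc_\pm$, tend to $\openone$ at $\lambda=\infty$, and vanish exactly at the $\lambda_k^\pm$, the zeros being carried by $m_1^\pm$ alone — this is precisely where the hypothesis that the eigenvalues are simple and that $F_k,G_k$ are vectors (the minimal co-adjoint orbit) is used. On $\bbbr$ the identity $T\hat T=\openone$ expresses the jumps $m_1^+m_1^-$ and ${\bf m}_2^+({\bf m}_2^-)^{-1}$ through the reflection coefficients alone (a relation of the shape $m_1^+m_1^-=(1+(\vec\tau^{+},\vec\tau^{-})+\dots)^{-1}$, positive on the real axis by the reduction). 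This is a scalar/matrix RHP on the line with canonical normalization and prescribed zero set, whose solution is unique and given by a Blaschke-type product $\prod_k(\lambda-\lambda_k^\pm)/(\lambda-\lambda_k^\mp)$ times the exponential of a Cauchy integral of the logarithm of the data; so $D_J^\pm$, and with it every entry of $T(\lambda,0)$, is determined. The equivalence of $\mathfrak{T}_1$ and $\mathfrak{T}_2$ follows at this point, since \eqref{eq:25.1'} together with $T\hat T=\openone$ converts the upper Gauss data $\vec\tau^\pm$ into the lower ones $\vec\rho^\pm$ (and back) once $D_J^\pm$ is known.

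\emph{Step 2 — recovering $Q(x,t)$.} From $T(\lambda,0)$ we obtain the Gauss factors $S_J^\pm(0,\lambda)$, hence the sewing function $G_{0,J}(\lambda)=\hat S_J^-(0,\lambda)S_J^+(0,\lambda)$ and $G_J(x,t,\lambda)$ of the RHP \eqref{eq:rhp1}. The discrete part of $\mathfrak{T}_1$ (resp. $\mathfrak{T}_2$) enters as the prescribed singularity structure of the solution: at $\lambda=\lambda_k^\pm$ the dressing factor \eqref{dressfac_bcd_n} has rank-one residues $A_k=X_kF_k^T$, $B_k=Y_kG_k^T$ with $F_k,G_k$ the vectors built from the constants $\vec\tau_{0k}^+,\vec\rho_{0k}^+$ of \eqref{eq:nknk1} (equivalently $\vec\nu_{0k},\vec\mu_{0k}$) via \eqref{eq:FkGk}--\eqref{eq:njxt}, and $X_k,Y_k$ the unique solutions of the linear system \eqref{eq:XY}; the algebraic constraint $F_{k,0}^TS_0F_{k,0}=0=G_{k,0}^TS_0G_{k,0}$ is exactly the normalization $(\vec\nu_{0k}^Ts_0\vec\nu_{0k})=1$ appearing in \eqref{eq:njxt}, so the listed data are admissible and complete. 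Thus both the continuous part (the jump on $\bbbr$) and the discrete part (poles plus polarization vectors) of the singular RHP are fixed; Zakharov--Shabat's uniqueness theorem for the RHP with canonical normalization \eqref{eq:rh-n} yields a unique $\xi^+(x,t,\lambda)$, and \eqref{eq:XI-Q} then produces a unique $Q(x,t)$, which completes the proof.

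\emph{Main obstacle.} The routine parts are the convergence of the Cauchy integrals and the bookkeeping of the Blaschke factors at the $\lambda_k^\pm$. The step that needs real care is showing that the minimal data genuinely pin down the \emph{matrix} block ${\bf m}_2^\pm$ — and hence the off-diagonal block ${\bf T}_{22}$ — namely that, once $m_1^\pm$ is known, the $SO(2r+1)$--relations leave for ${\bf m}_2^\pm$ a matrix RHP on $\bbbr$ that is \emph{regular}, carrying no zeros of its own; this is again exactly where simplicity of the eigenvalues and the vector (not $s\times(2r+1)$) nature of $F_k,G_k$ are essential. One must also check that the reconstructed $Q(x,t)$ satisfies the same reduction $Q=Q^\dagger$, so that the object produced lies in the expected class, and (for the word ``minimal'') that no piece of $\mathfrak{T}_1$ or $\mathfrak{T}_2$ can be dropped — the latter by a dimension count against the number of free functions in $Q$. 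These are the points I would devote the most attention to.
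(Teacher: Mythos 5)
Your proposal is essentially correct, but it runs the reconstruction in the opposite order from the paper and uses a different mechanism for the scattering matrix. The paper's proof goes: from $\vec{\tau}{\,}^\pm$ build $S_{0J}^\pm$ and the sewing function of a \emph{regular} RHP, solve it (Zakharov--Shabat uniqueness) to get $\xi_0^\pm$ and the regular potential $Q_0=[J,\xi_{01}^+]$, then dress with the $2N$-pole factor built from $\lambda_k^\pm$ and $\vec{\tau}{\,}^+_{0k}$ to get $Q(x,t)$; the scattering matrix is then recovered not by dispersion relations but by taking the $x\to\infty$ limits of the regular solution to read off $D_{0,J}^\pm$ and $T_{0,J}^\mp$, and correcting for the dressing through the explicit asymptotic factors $u_\pm(\lambda)$ with $c(\lambda)=\prod_j(\lambda-\lambda_j^+)/(\lambda-\lambda_j^-)$. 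You instead recover $T(\lambda,0)$ first, obtaining $D_J^\pm$ as the solution of a scalar/matrix RHP on $\bbbr$ with prescribed zeros (Blaschke product times a Cauchy integral), and only then pass to the singular RHP for $Q$. Both routes are legitimate; the paper's buys you $D^\pm$ for free from the already-solved regular RHP and so never needs the identity expressing $m_1^+m_1^-$ on $\bbbr$ through the reflection coefficients, whereas your route makes the logical dependence ``data $\Rightarrow T \Rightarrow Q$'' cleaner but forces you to actually establish that identity for the {\bf BD.I} case (where it involves $c_1^\pm$ and the quadratic form $(\vec{b},s_0\vec{b})$, not just $1+|\rho|^2$) and, as you yourself flag, to show that the matrix block ${\bf m}_2^\pm$ satisfies a \emph{regular} RHP once $m_1^\pm$ is extracted. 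Those two points are left as ``relations of the shape\dots'' in your write-up and are precisely the technical debt your ordering incurs; the paper's asymptotic-limit argument (eqs.\ for $D_{0,J}^\pm$ and $T=u_+T_0\hat{u}_-$) is the cheaper way to discharge it. Your additional remarks on minimality and on preserving the reduction $Q=Q^\dagger$ go beyond what the paper's outline addresses and are welcome.
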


\begin{remark}\label{rem:5}
The data $\lambda _k^+$ and $\lambda _k^-=(\lambda _k^+)^*$ characterize the discrete eigenvalues of $L$.
The vectors $\vec{\tau}{\,}^+_{0k}$ and $\vec{\tau}{\,}^-_{0k}=(\vec{\tau}{\,}^+_{0k})^*$
(resp. $\vec{\rho}{\,}^+_{0k}$ and $\vec{\rho}{\,}^-_{0k}=(\vec{\rho}{\,}^+_{0k})^*$)
determine the corresponding eigenfunction of $L$. Note also that by definition these vectors
satisfy $(\vec{\tau}{\,}^{+,T}_{0k}s_0 \vec{\tau}{\,}^+_{0k})=0$ and
$(\vec{\rho}{\,}^{+,T}_{0k}s_0 \vec{\rho}{\,}^+_{0k})=0$.
\end{remark}

\begin{proof}[Outline of the proof.] Let us be given $\mathcal{T}_1 $. Using $\vec{\tau}{\,}^+(\lambda ,t)$ and
$\vec{\tau}{\,}^-(\lambda ,t)=(\vec{\tau}{\,}^+(\lambda ,t))^*$ we construct $S_{0J}^+(\lambda ,t)$ and
 $S_{0J}^-(\lambda ,t)$ and therefore obtain also the sewing function  $G_{0}(\lambda ,t) =\hat{S}_{0J}^-(\lambda ,t)
S_{0J}^+(\lambda ,t)$ for a regular RHP. According to the Zakharov-Shabat theorem it has unique solution
$\xi^\pm_0 (x,t,\lambda )$. The corresponding regular potential is obtained by:
\begin{equation}\label{eq:Q0}
\begin{split}
Q_0(x,t) &= \lim_{\lambda \to\infty} \lambda \left(J - \xi^\pm_0 (x,t,\lambda )J\hat{\xi}^\pm_0 (x,t,\lambda )\right) \\
&= [J, \xi_{01}^+(x,t) ],
\end{split}
\end{equation}
where $\xi_{01}^+(x,t) =\lim_{\lambda \to \infty} \lambda (\xi^+_0 (x,t,\lambda ) -\openone )$.

Next we use the dressing method to dress the regular solution $\xi^\pm_0 (x,t,\lambda )$ with the dressing
factor $u(x,t,\lambda )$ of the form (\ref{dressfac_bcd_n}). In order to do it we make use of the
set of eigenvalues $\lambda _k^+$ and $\lambda _k^- =(\lambda _k^+)^*$ and instead  of the polarization vectors
(\ref{eq:FkGk}) we use:
\begin{equation}\label{eq:nnk}
\begin{split}
|n_k(x,t)\rangle & = \xi_0^+(x,t,\lambda _k^+)e^{-i\lambda _k^+ (x +\lambda _k^+t)J} \vec{\tau}_{0k}{}^+ , \\
|n_k^*(x,t)\rangle & = \xi_0^-(x,t,\lambda _k^-) e^{-i\lambda _k^- (x +\lambda _k^-t)J}\vec{\tau}_{0k}{}^- .
\end{split}
\end{equation}
After solving the algebraic equations for $|X_k(x,t)\rangle$ and $|Y_k(x,t)\rangle$ we find explicitly
the dressed potential
\begin{equation}\label{eq:Qdr}
Q(x,t) = Q_0(x,t) + \sum_{k=1}^{N} [J, A_k(x,t) + B_k(x,t)],
\end{equation}
which proves the first part of the theorem.

Let us now show how one can recover $T(\lambda ,t)$ from $\mathcal{T}_1 $. Given the regular solution
$\xi_0^\pm(x,t,\lambda )$ we can find
\begin{equation}\label{eq:Dpm}
D_{0,J}^\pm (\lambda ) =\lim_{x\to\infty} \xi_0^\pm(x,t,\lambda ),
\end{equation}
and also
\begin{equation}\label{eq:DT}
T_{0,J}^\mp (\lambda ) D_{0,J}^\pm (\lambda ) =\lim_{x\to\infty} e^{i (\lambda x + \lambda ^2t)J }
\xi_0^\pm(x,t,\lambda ) e^{-i (\lambda x + \lambda ^2t)J }.
\end{equation}
Thus we have recovered all Gauss factors $T_{0,J}^\mp (\lambda )$, $ D_{0,J}^\pm (\lambda )$ and
$S_{0,J}^\pm (\lambda )$ of the `undressed' scattering matrix $T_0(\lambda ,t)$, so
\begin{equation}\label{eq:T0}
T_{0,J} (\lambda ,t) =  T_{0,J}^\mp (\lambda,t )  D_{0,J}^\pm (\lambda ) \hat{S}_{0,J}^\pm (\lambda ,t).
\end{equation}
In order to take into account the effect of dressing we make use of the relations between the dressed
and undressed Jost solutions:
\begin{equation}\label{eq:Jsol0}
\begin{aligned}
\psi(x,t,\lambda ) &= u(x,t,\lambda ) \psi_0(x,t,\lambda )\hat{u}_+(\lambda ), \\
\phi(x,t,\lambda ) &= u(x,t,\lambda ) \phi_0(x,t,\lambda )\hat{u}_-(\lambda ),
\end{aligned}
\end{equation}
where $u_\pm(\lambda )= \lim_{x\to\pm \infty} u(x,t,\lambda )$. As a result we get:
\begin{equation}\label{eq:T0T}
\begin{split}
T(\lambda ,t) &= \hat{\psi}(x,t,\lambda ) \phi(x,t,\lambda ) \\
&= u_+(\lambda )\hat{\psi}_0(x,t,\lambda ) \phi_0(x,t,\lambda ) \hat{u}_-(\lambda ) \\
&= u_+(\lambda )T_0(\lambda ,t)\hat{u}_-(\lambda ).
\end{split}
\end{equation}
Skipping the details we state the result:
\begin{equation}\label{eq:upm}
u_+(\lambda ) = \left(\begin{array}{ccc} c(\lambda ) & 0 & 0 \\ 0 & \openone & 0 \\ 0 & 0 & 1/c(\lambda )\\
\end{array}\right) , \qquad
u_-(\lambda ) = \left(\begin{array}{ccc} 1/c(\lambda ) & 0 & 0 \\ 0 & \openone & 0 \\ 0 & 0 & c(\lambda )\\
\end{array}\right),
\end{equation}
where $c(\lambda ) = \prod_{j=1}^{N} \frac{\lambda - \lambda _j^+}{\lambda - \lambda _j^-}$.

The fact that the set $\mathcal{T}_2 $ is also a minimal set of scattering data is proved analogously.

\end{proof}

\section{Reductions of MNLS}

Along with the typical reduction $Q=Q^\dag$ mentioned above one can impose additional reductions
using the reduction group proposed by Mikhailov \cite{Mikh}.
They are automatically compatible with the Lax representation of the corresponding MNLS eq. Below we
make use of two types of $\bbbz_2$-reductions\cite{R95}:
\begin{equation}\label{eq:U-V.a}
\begin{aligned}
&\mbox{1)} &\quad C_1 U^{\dagger}(x,t, \lambda^* ) C_1^{-1}&= U(x,t,\lambda ),
&\quad C_1 V^{\dagger}( x,t,\lambda^* ) C_1^{-1}&= V(x,t,\lambda ), \\
&\mbox{2)}  &\quad C_2 U^{T}(x,t,\lambda )C_2^{-1} &= -U(x,t,\lambda ), &\quad
C_2 V^{T}( x,t,\lambda )C_2^{-1} &= -V(x,t,\lambda ),
\end{aligned}
\end{equation}
where $C_1$ and $C_2$ are involutions of the Lie algebra $so(2r+1$, i.e.   $C_i^2=\openone$. They
can be chosen to be either diagonal (i.e., elements of the Cartan subgroup of $SO(2r+1)$) or
elements of the Weyl group.

The typical reductions of the MNLS eqs. is a  class 1) reduction obtained by specifying
$C_1$ to be the identity automorphism of $\fr{g}$; below we list several choices for $C_1$
leading to inequivalent reductions:
\begin{equation}\label{eq:C1-1}
\begin{aligned}
\mbox{1a)} \quad C_1 &=\openone, &\quad \vec{p}(x) &=\vec{q}{\,}^*(x) , \quad
&\mbox{1b)} \quad C_1&=K_1, & \vec{p}(x) &=K_{01}\vec{q}{\,}^*(x) , \\
\mbox{1c)} \quad C_1 &=S_{e_2}, & \quad \vec{p}(x) &=K_{02}\vec{q}{\,}^*(x) , \quad &\mbox{1d)}
\quad C_1&=S_{e_2}S_{e_3},  & \vec{p}(x) &=K_{03}\vec{q}{\,}^*(x) .
\end{aligned}
\end{equation}
We also make use of type 2) reductions:
\begin{equation}\label{eq:C1-2}
\begin{aligned}
\mbox{2e)} \quad C_2 &=K_4, &\quad \vec{q}(x) &=- K_{04} s_0\vec{q}(x) , \quad
&  \quad   \vec{p}(x) &=-K_{04} s_0\vec{p}(x) ,\\
\mbox{2f)} \quad C_2 &=K_5, &\quad \vec{q}(x) &= K_{05} \vec{q}(x) , \quad
&  \quad   \vec{p}(x) &=K_{05} \vec{p}(x) ,
\end{aligned}
\end{equation}
where
\begin{equation}\label{eq:C1-2'}
K_j =\bdiag(1, K_{0j}, 1), \qquad K_{01} = \diag(\epsilon_1,\dots ,\epsilon_{r-1},1, \epsilon_{r-1}, \dots,
\epsilon_1 ),
\end{equation}
for $j=1,2,3,5$ and $\epsilon_j=\pm 1$.
The matrices $K_{02}$, $K_{03}$ and $K_4$ are not diagonal and may take the form:
\begin{equation}\label{eq:C1-3}
\begin{aligned}
 K_{02} &= \left( \begin{array}{ccc} 0 & 0 & 1 \\ 0 & -1 & 0 \\ 1 & 0 & 0 \end{array}\right), &
 \quad  K_{4} &= \left( \begin{array}{ccc} 0 & 0 & 1 \\ 0 & K_{04} & 0 \\ 1 & 0 & 0 \end{array}\right), & \\
 K_{02} &= \left( \begin{array}{ccccc} 0 & 0 & 0 & 0 & -1 \\ 0 & 1 &0 & 0 & 0 \\
0 & 0 & -1 & 0 & 0 \\ 0 & 0 & 0 & 1 & 0  \\ -1 & 0 & 0 & 0 & 0  \end{array}\right), & \quad
K_{03} &= \left( \begin{array}{ccccc} 0 & 0 & 0 & 0 & -1 \\ 0 & 0 &0 & 1 & 0 \\
0 & 0 & -1 & 0 & 0 \\ 0 & 1 & 0 & 0 & 0  \\ -1 & 0 & 0 & 0 & 0  \end{array}\right). &
\end{aligned}
\end{equation}

Each of the above reductions impose constraints on the FAS, on the scattering matrix $T(\lambda)$ and on its
Gauss factors $S^\pm_J(\lambda)$, $T^\pm_J(\lambda)$ and $D^\pm_J(\lambda)$. For the type 1 reductions (cases 1a) -- 1d))
these have the form:
\begin{equation}\label{eq:C1-3'}
\begin{aligned}
(S^+(\lambda^*))^\dag &= K_j^{-1}\hat{S}^-(\lambda)K_j  &\quad (T^+(\lambda^*))^\dag &=K_j^{-1}\hat{T}^-(\lambda)K_j  &\\
(D^+(\lambda^*))^\dag &= K_j^{-1}\hat{D}^-(\lambda)K_j &&  \\
\vec{\tau}{\,}^+ &= K_{0j}\vec{\tau}{\,}^{-,*}, &\quad  \vec{\rho}{\,}^+ &= K_{0j}\vec{\rho}{\,}^{-,*}, & j=1,2,3
\end{aligned}
\end{equation}
where the matrices $K_j$ are specific for each choice of the automorphisms $C_1$, see eqs. (\ref{eq:C1-1}).
In particular, from the last line of (\ref{eq:C1-3'}) and (\ref{eq:C1-2}) we get:
\begin{equation}\label{eq:m1pm}
(m_1^+(\lambda^*))^* = m_1^-(\lambda),
\end{equation}
and consequently, if $m_1^+(\lambda)$ has zeroes at the points $\lambda_k^+$, then
$m_1^-(\lambda)$ has  zeroes at:
\begin{equation}\label{eq:lapm}
\lambda_k^- = (\lambda_k^+)^*, \qquad k=1,\dots, N.
\end{equation}

For the type 2) reductions we obtain:
\begin{equation}\label{eq:C1-3t1}
\begin{aligned}
& \mbox{2e)} & (S^\pm(\lambda))^T &= K_4^{-1}\hat{S}^\pm (\lambda)K_4  &\quad (T^\pm(\lambda))^T &=K_4^{-1}\hat{T}^\pm(\lambda)K_4 \\
&  &  (D^\pm(\lambda))^T &= K_4^{-1}\hat{D}^\pm (\lambda)K_4  && \\
& & \vec{\tau}{\,}^\pm &= -K_{04} s_0\vec{\tau}{\,}^{\pm}, &\quad  \vec{\rho}{\,}^\pm &= -K_{04}s_0\vec{\rho}{\,}^{\pm},
\end{aligned}
\end{equation}
and
\begin{equation}\label{eq:C1-3t2}
\begin{aligned}
& \mbox{2f)} &\quad (S^+(\lambda))^T &= K_5^{-1}\hat{S}^-(-\lambda)K_5  &\quad (T^+(\lambda))^T &=K_5^{-1}\hat{T}^-(-\lambda)K_5 \\
& & (D^+(\lambda))^T &= K_5^{-1}\hat{D}^-(-\lambda)K_5  && \\
& & \vec{\tau}{\,}^+(\lambda) &= K_{05} \vec{\tau}{\,}^{-} (-\lambda), &\quad  \vec{\rho}{\,}^+(\lambda) &=
-K_{05}\vec{\rho}{\,}^{-}(-\lambda),
\end{aligned}
\end{equation}
For the 2e) reduction with $n=3$ we may choose $K_4$ to corresponds to the Weyl group
element $S_{e_1}$, so $K_{04}=\openone$. As a result we get:
%for $K_{04}= - \openone $ we have $ q_1=q_3$, i.e.
\begin{equation}\label{eq:red1}
\Phi_1= -\Phi_{-1}
\end{equation}
and $\Phi_0$  arbitrary.
This reduction of eq. (\ref{eq:1}) is also important for the BEC \cite{Kevre*08}. From
(\ref{eq:C1-3t1}) we find $\nu_{01}=\nu_{03}$. The effect of this constraint is that for the one-soliton
solution we get $\Phi_{{\rm 1s};1} =-\Phi_{{\rm 1s};-1}$.

Our next remark following  \cite{ps99} is that this reduction applied to the $F=1$ MNLS (\ref{eq:1}) leads to a 2-component MNLS
which after the change of variables
\begin{equation}\label{eq:2MNLS'}
\Phi_1 = \frac{1}{2} (w_1 + iw_2), \qquad \Phi_0  =    \frac{i}{\sqrt{2}}(w_1 - iw_2),
\end{equation}
leads to two disjoint NLS equations for $w_1$ and $w_2$ respectively.

It is only logical that applying the constraint $\nu_{01}= \nu_{03}$ the explicit expression for the one-soliton
solution (\ref{eq:1s-1}) simplifies and reduces to the standard soliton solutions of the scalar NLS.

For the other two examples of type 2) reductions we choose $n=5$ and $K_4$, and $K_5$  correspond to the Weyl group elements $S_{e_2}S_{e_3}$ and
$S_{e_2-e_3}$ respectively. Then $K_{04} = -s_0$ and
\begin{equation}\label{eq:k05}
K_{05} = \left( \begin{array}{ccccc} 0 & 1 & 0 & 0 & 0 \\ 1 & 0 &0 & 0 & 0 \\
0 & 0 & 1 & 0 & 0 \\ 0 & 0 & 0 & 0 & -1  \\ 0 & 0 & 0 & -1 & 0  \end{array}\right),
\end{equation}
For these choices of $K_4$, $K_5$ we obtain:
\begin{equation}\label{eq:red2}
\begin{aligned}
& \mbox{2e)} &\quad \Phi_{ 2} &=   \Phi_{-2}, &\qquad \Phi_{1} &=   \Phi_{- 1},\\
& \mbox{2f)} &\quad
\Phi_{\pm 2} &= \pm \frac{c}{\sqrt{1+c^2}} \Phi_{\pm 1}', &\qquad \Phi_{\pm 1} &= \frac{1}{\sqrt{1+c^2}} \Phi_{\pm 1}',
\end{aligned}
\end{equation}
It reduces the $F=2$ spin BEC model into the $F=1$ model.

The corresponding relations for the Gauss factors and for the polarization vectors are given by:

\begin{equation}\label{eq:red3}
\Phi_{\pm 2} = \pm \frac{c}{\sqrt{1+c^2}} \Phi_{\pm 1}', \qquad \Phi_{\pm 1} = \frac{1}{\sqrt{1+c^2}} \Phi_{\pm 1}',
\end{equation}

\section{Two Soliton interactions}

In this section we generalize the classical results of Zakharov and Shabat about soliton interactions \cite{zs72}  to the class
of MNLS equations related to BD.I symmetric spaces. For detailed exposition see the monographs \cite{ZMNP,FaTa}. These results were generalized
for the vector nonlinear Schr\"odinger equation by Manakov \cite{ma74}, see also \cite{AblPrinTru*04,Laksh,Tsu}.
The Zakharov Shabat approach consisted in calculating the asymptotics of generic $N$-soliton solution of NLS for $t \to \pm\infty$ and establishing
the pure elastic character of the generic soliton interactions. By generic here we mean $N$-soliton solution whose parameters
$\lambda_k^\pm = \mu_k \pm i\nu_k$ are such that $\mu_k \neq \mu_j$ for $k\neq j$. The pure elastic character of the soliton interactions
is demonstrated by the fact that for $t\to\pm\infty$ the generic $N$-soliton solution splits into sum of $N$ one soliton solutions each
preserving  its amplitude $2\nu_k$ and velocity $\mu_k$. The only effect of the interaction consists in shifting the center of mass
and the initial phase of the solitons. These shifts can be expressed in terms of $\lambda_k^\pm$ only; for detailed exposition see \cite{FaTa}.

Let us apply these ideas to the MNLS equations studied above.
Namely we  use the $2$-soliton solution (\ref{eq:2-sol}) derived above and
calculate its asymptotics   along the trajectory of the first soliton. To this end
we keep $z_1(x,t)$ fixed and let $\tau =z_2 -z_1$ tend to $\pm \infty$. Therefore it will  be enough to insert the asymptotic values of
the matrix elements of $\mathcal{M}$ for $\tau\to\pm\infty$ and keep only the leading terms. For $\tau\to\infty$ that gives:
\begin{equation}\label{eq:taup}
\begin{aligned}
\kappa_{22} & \simeq e^{2\tau} \exp(\nu_2z_1/\nu_1) + 2 \mathcal{C}_1 ,\\
\kappa_{12} &=   e^{\tau} \exp((1+\nu_2/\nu_1)z_1 +i(\phi_1 -\phi_2)) + \mathcal{O}(1) ,\\
\kappa_{21} &=  e^{\tau} \exp((1+\nu_2/\nu_1)z_1 -i(\phi_1 -\phi_2)) + \mathcal{O}(1) ,\\
f_{12} &=  e^{\tau} \exp(-(1-\nu_2/\nu_1)z_1 +i(\phi_1 -\phi_2)) + \mathcal{O}(1) ,
\end{aligned}
\end{equation}
while for $\tau\to -\infty$ we get:
\begin{equation}\label{eq:taum}
\begin{aligned}
\kappa_{22} & \simeq e^{-2\tau} \exp(-\nu_2z_1/\nu_1) + 2 \mathcal{C}_1 ,\\
\kappa_{12} &=   e^{-\tau} \exp(-(1+\nu_2/\nu_1)z_1 -i(\phi_1 -\phi_2)) + \mathcal{O}(1) ,\\
\kappa_{21} &=  e^{-\tau} \exp(-(1+\nu_2/\nu_1)z_1 +i(\phi_1 -\phi_2)) + \mathcal{O}(1) ,\\
f_{12} &=  e^{-\tau} \exp((1-\nu_2/\nu_1)z_1 -i(\phi_1 -\phi_2)) + \mathcal{O}(1) ,
\end{aligned}
\end{equation}
After somewhat lengthy  calculations we get:
\begin{equation}\label{eq:Zpm}
\begin{split}
\lim_{\tau\to\infty} \vec{q}_{\rm 2s}(x,t) &=-\frac{i\sqrt{2} \nu_1 e^{-i(\phi_1 -\alpha_+)} \left( e^{-z_1-r_+} s_0|\vec{\nu}_{01}\rangle
+  e^{z_1+r_+} |\vec{\nu}_{01}^*\rangle \right) }{\cosh(2(z_1+ r_+)) + (\vec{\nu}_{01}^\dag,\vec{\nu}_{01})},  \\
\lim_{\tau\to -\infty} \vec{q}_{\rm 2s}(x,t) &=\frac{i\sqrt{2} \nu_1 e^{-i(\phi_1 +\alpha_+)} \left( e^{-z_1+r_+} s_0|\vec{\nu}_{01}\rangle
+  e^{z_1-r_+} |\vec{\nu}_{01}^*\rangle \right) }{\cosh(2(z_1- r_+)) + (\vec{\nu}_{01}^\dag,\vec{\nu}_{01})},
\end{split}
\end{equation}
where
\[ r_+ =  \ln \left| \frac{\lambda_1^+ -\lambda_2^+}{\lambda_1^+ -\lambda_2^-} \right|, \qquad
\alpha_+ = \arg \frac{\lambda_1^+ -\lambda_2^+}{\lambda_1^+ -\lambda_2^-}.  \]

For $n=3$ and $n=5$ the right hand sides of (\ref{eq:Zpm}) coincide with the one-soliton solutions (\ref{eq:1s-1}) and
(\ref{eq:1s-3}) respectively.
This means that the $2$-soliton interaction for the above MNLS eqs. is purely elastic. The solitons preserve their shapes
and velocities and the only effect of the interaction consist in shifts of the center of mass and the phase. From this
point of view  the interaction is the same like for the scalar NLS eq.

It is important to check whether the $N$-soliton interactions consist of sequence of elementary 2-soliton
interactions and the shifts are additive.

\section{Effects of reductions and initial conditions on MNLS}

\begin{theorem}\label{th:1}
Let the minimal set of scattering data $\mathcal{T}_j$, $j=1,2$ for $t=0$ satisfy
the reduction conditions  (\ref{eq:C1-3t1}). Then
the solution $\vec{q}(x,t)$ of the MNLS with such initial data will satisfy the corresponding reduction 2e)
(\ref{eq:C1-2}).

\end{theorem}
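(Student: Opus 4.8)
The plan is to propagate the reduction condition from $t=0$ to all $t$ by exploiting the fact that the minimal scattering data evolve by the \emph{linear} equations \eqref{eq:evol}, while the reduction constraints \eqref{eq:C1-3t1} are preserved by this trivial evolution. Concretely, suppose $\mathcal{T}_j$ at $t=0$ satisfies \eqref{eq:C1-3t1}; I would first observe that the evolution law \eqref{eq:dTdt}, equivalently the component equations $i\partial_t\vec{\tau}{\,}^\pm = \mp\lambda^2\vec{\tau}{\,}^\pm$ (derived from \eqref{eq:evol} and the expressions $\vec{\tau}{\,}^- = \vec{B}^+/m_1^-$, $\vec{\tau}{\,}^+=\vec{b}^-/m_1^+$, together with $i\,dm_1^\pm/dt=0$), is compatible with the constraints $\vec{\tau}{\,}^\pm = -K_{04}s_0\vec{\tau}{\,}^\pm$: since $K_{04}s_0$ is a constant matrix and both $\vec{\tau}{\,}^+$ and $\vec{\tau}{\,}^-$ acquire the \emph{same} time factor structure under \eqref{eq:evol} after accounting for the $m_1^\pm$ being constant, the relation $\vec{\tau}{\,}^\pm(\lambda,t) = -K_{04}s_0\vec{\tau}{\,}^\pm(\lambda,t)$ holds for all $t$ once it holds at $t=0$. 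The same argument applies to $\vec{\rho}{\,}^\pm$ and to the discrete data $\vec{\tau}_{0k}^\pm$, $\vec{\rho}_{0k}^\pm$ (whose evolution is likewise governed by \eqref{eq:evol} with the $\lambda$ replaced by the fixed eigenvalue $\lambda_k^\pm$), so the entire minimal set $\mathcal{T}_j(t)$ satisfies \eqref{eq:C1-3t1} for every $t$.

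\textbf{From the data back to the potential.} Next I would invoke Theorem~\ref{lem:ms}: the minimal set $\mathcal{T}_j(t)$ determines uniquely the scattering matrix $T(\lambda,t)$ and the potential $Q(x,t)$, via the construction in the proof of that theorem — building the Gauss factors $S_{0,J}^\pm$, solving the regular RHP, dressing with the factor \eqref{dressfac_bcd_n} using the eigenvalues and polarization vectors, and reading off $Q(x,t)$ from \eqref{eq:Qdr}. The key point is that each step of this reconstruction is \emph{covariant} with respect to the reduction group action: if the input data satisfy \eqref{eq:C1-3t1}, then the sewing function $G_{0,J}(\lambda,t)$ satisfies the induced symmetry $C_2 G_{0,J}^T(\lambda,t) C_2^{-1} = \hat{G}_{0,J}(\lambda,t)$, hence by uniqueness of the regular RHP solution the FAS $\xi_0^\pm$ inherit the corresponding constraint, and the regular potential $Q_0(x,t)$ given by \eqref{eq:Q0} satisfies $C_2 Q_0^T C_2^{-1} = -Q_0$. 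The dressing factor built from the symmetric discrete data then also respects the symmetry, so the dressed $Q(x,t)$ satisfies $K_4 Q^T(x,t) K_4^{-1} = -Q(x,t)$, which in components is exactly the reduction condition 2e) of \eqref{eq:C1-2}, namely $\vec{q}(x,t) = -K_{04}s_0\vec{q}(x,t)$, i.e.\ the relations \eqref{eq:red2} on the $\Phi_j$. Finally, since $\vec{q}(x,t)$ is a solution of the MNLS \eqref{eq:4.2} (it is the potential reconstructed from scattering data evolving by \eqref{eq:dTdt}, which is the content of the inverse scattering method), and it satisfies the algebraic reduction, it is a solution of the reduced MNLS.

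\textbf{Main obstacle.} I expect the crux to be the middle step: verifying carefully that the reduction condition \eqref{eq:C1-3t1} on the scattering data translates, through the RHP and the dressing construction, into the pointwise algebraic condition $K_4 Q^T K_4^{-1} = -Q$ on the potential — in other words, that the map ``data $\mapsto$ potential'' is equivariant under the reduction group. This requires knowing that the RHP with the symmetric sewing function has a solution respecting the symmetry, which follows from \emph{uniqueness} of the regular RHP solution (Zakharov--Shabat): if $\xi_0^\pm(x,t,\lambda)$ solves the RHP, so does the transformed solution $C_2(\hat{\xi}_0^\pm)^T(x,t,\lambda)C_2^{-1}$ (with appropriate $\lambda$-argument), and by uniqueness the two coincide, forcing the symmetry. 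One then checks that the dressing factor \eqref{dressfac_bcd_n} built from data satisfying \eqref{eq:C1-3t1} commutes appropriately with $C_2$ — this is where the constraints $\vec{\tau}_{0k}^\pm = -K_{04}s_0\vec{\tau}_{0k}^\pm$ and $\lambda_k^- = (\lambda_k^+)^*$ enter — so that $u(x,t,\lambda)$ and hence the full $Q(x,t)$ inherit the symmetry. The consistency of \eqref{eq:C1-3t1} with the linear time evolution \eqref{eq:evol}, though conceptually simple, also deserves an explicit line since the $-\lambda^2$ sign in front of $\vec{\tau}{\,}^\pm$ differs between the two, and one must check the $\lambda \to \lambda$ (not $\lambda\to-\lambda$) nature of the type~2e) symmetry makes the signs match.
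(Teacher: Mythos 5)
Your proposal is correct and follows essentially the same route as the paper's proof: propagate the reduction through the linear evolution \eqref{eq:evol}, deduce the symmetry $G = K_4^{-1}\hat{G}^T K_4$ of the sewing function, use uniqueness of the regular RHP solution to force $\xi_0^\pm = K_4^{-1}\hat{\xi}_0^{\pm,T}K_4$, extend this to the dressing factor via the symmetric discrete data, and read off $Q = -K_4^{-1}Q^T K_4$ from the reconstruction formula \eqref{eq:XI-Q}. Your explicit remark that the transformed FAS solves the same canonically normalized RHP (which is why uniqueness applies) is exactly the mechanism the paper relies on, stated slightly more carefully.
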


\begin{proof}
Let the minimal sets of scattering data, say $\mathcal{T}_1$ satisfy the reduction conditions (\ref{eq:C1-3t1})
for $t=0$.  It is easy to check that  their evolution law (\ref{eq:evol}) is compatible with the
reduction, so (\ref{eq:C1-3t1}) will hold for all $t>0$. As a result the corresponding Gauss factors $S^\pm$,
$T^\pm$ and $D^\pm$, and consequently, the sewing function in the RHP $G(x,t,\lambda)$ will
satisfy
\begin{equation}\label{eq:Gred}
G(x,t,\lambda ) = K_4^{-1} \hat{G}^T(x,t,\lambda )K_4.
\end{equation}
The next consequence is that both $\xi^\pm $ and $K_4^{-1}\hat{\xi}^{\pm,T} K_4 $
are solutions of the RHP (\ref{eq:rhp1}) with the same sewing function and the same
canonical normalization. Therefore from the uniqueness of the solution of RHP
we get that the regular solutions of this RHP satisfy:
\begin{equation}\label{eq:xired}
\xi_0^\pm(x,t,\lambda ) = K_4^{-1} \hat{\xi}_0^{\pm, T}(x,t,\lambda )K_4.
\end{equation}
Next we note that the scattering data related to the discrete spectrum also satisfy the
reduction conditions. This means that the dressing factor $u(x,t,\lambda )$ and the singular
solutions $\xi^\pm(x,t,\lambda )=u(x,t,\lambda ) \xi_0^\pm(x,t,\lambda )\hat{u}_-(\lambda )$
also satisfy:
\begin{equation}\label{eq:xired1}
u(x,t,\lambda ) = K_4^{-1} \hat{u}^{T}(x,t,\lambda )K_4, \qquad
\xi^\pm(x,t,\lambda ) = K_4^{-1} \hat{\xi}^{\pm, T}(x,t,\lambda )K_4.
\end{equation}
It remains to check that from equations (\ref{eq:XI-Q}) and (\ref{eq:xired1}) there follows:
\begin{equation}\label{eq:Qred}
Q(x,t) = -K_4^{-1} Q^T(x,t) K_4.
\end{equation}
\end{proof}

\begin{remark}\label{rem:6}
Note that the above arguments are not specific for the choice of $K_4$.
The above theorem can be proved along the same lines for any reduction of type 1 and
type 2.
\end{remark}

A simple consequence of the above theorem is the following.
Consider $n=3$ and choose $\tau_1^+= \tau_3^+$ for $t=0$. Then the corresponding solution of $F=1$ BEC (\ref{eq:1})
will also satisfy $\Phi_1=-\Phi_{-1}$ for all $t>0$, i.e. will be a solution to
\begin{equation}\label{eq:fi1}
\begin{aligned}
& i\partial_{t} \Phi_{1}+\partial^{2}_{x} \Phi_{1}+2(|\Phi_{1}|^2
+2|\Phi_{0}|^2) \Phi_{1} -2\Phi_{1}^{*}\Phi_{0}^2=0,  \\
& i\partial_{t} \Phi_{0}+\partial^{2}_{x}\Phi_{0}+2(2|\Phi_{1}|^2
+|\Phi_{0}|^2) \Phi_{0} -2\Phi_{0}^{*}\Phi_{1}^2 =0,
\end{aligned}
\end{equation}
If we insert eq.  (\ref{eq:2MNLS'}) into (\ref{eq:fi1}) we obtain
\begin{equation}\label{eq:w12}
\begin{aligned}
& i\partial_{t} w_{1}+\partial^{2}_{x} w_{1}+ 2|w_{1}|^2 w_{1} =0,  \\
& i\partial_{t} w_{2}+\partial^{2}_{x}w_{2}+2|w_{2}|^2 w_{2}  =0,
\end{aligned}
\end{equation}
Therefore, if we want to analyze the specific features of $F=1$ BEC we have to avoid such initial conditions.

Similarly, if for $n=5$ we choose  in (\ref{eq:1s-3}) $\nu_{01;1}= \nu_{01,5}$,  $\nu_{01;2}= -\nu_{01,4}$ we will obtain
in fact a solution to $F=1$ BEC.

\section{Conclusions and discussion}

Using the Zakharov-Shabat dressing method we have obtained the two-soliton solution and
have used it to analyze the soliton interactions of the MNLS equation. The conclusion is that
after the interactions the solitons recover their polarization vectors $\nu_{0k}$, velocities
and frequency velocities. The effect of the interaction  is, like in for the scalar NLS equation,
shift of the center of mass $z_1\to z_1 +r_+$ and shift of the phase $\phi_1 \to \phi_1 + \alpha_+$.
Both shifts are expressed through the related eigenvalues $\lambda_j^\pm$ only.

The next step would be to analyze multi-soliton interactions. Our hypothesis is that each soliton
will acquire a total shift of the center of mass that is sum of all elementary shifts from each
two soliton interactions. Similar result is expected for the total phase shift of the soliton.

Finally we have proved a theorem, stating that a symmetry imposed on the minimal set of scattering
data leads to a symmetry of the corresponding solution. So
if we want to analyze the specific features of a given MNLS   we have to avoid such initial conditions.

\section*{Acknowledgements} I am grateful to Professors J. C. Maraver, P. G. Kevrekidis and R.
Carretero-Gonzales for giving me the chance to  take part in this proceedings. I also wish to thank Professor N. Kostov
and Dr. T. Valchev for useful discussions and the referee for useful remarks.

\end{document}